\newtheorem{theorem}{Theorem}
\newtheorem{definition}{Definition}
\newtheorem{lemma}{Lemma}
\newcommand{\twotriangle}{\hfill $\bigtriangleup \bigtriangleup$  }
\newcommand{\eax}{\twotriangle  \end{example}}
\newcommand\bim{\begin{itemize}}
\newcommand\eim{\end{itemize}}
\title{Downsampling and Transparent Coding for Blockchain
}
\author{Qin Huang$^1$, Li Quan$^1$, and Shengli Zhang$^2$\\
$^1$School of Electronic and Information Engineering,\\ Beihang University, Beijing, China, 100191\\
$^2$College of Information Engineering,\\ Shenzhen University, Shenzhen, China, 518060\\
\thanks{This work was supported by Young Elite Scientists Sponsorship Program by CAST under Grant 2017QNRC001. This article was presented in part at IEEE INFOCOM 2019~\cite{8845124} and ACM CoNEXT 2019~\cite{Quan2019Transparent}.  (Qin Huang and Li Quan contributed equally to this work.) (Corresponding author: Qin Huang and Shengli Zhang. Email: qhuang.smash@gmail.com, zsl@szu.edu.cn.)}}
\begin{document}


\maketitle
\thispagestyle{empty}
\begin{abstract}
With the development of blockchain, the huge history data limits the scalability of the blockchain. This paper proposes to downsample these data to reduce the storage overhead of nodes. These nodes keep good independency, if downsampling follows the entropy of blockchain. Moreover, it demonstrates that the entire blockchain history can be efficiently recovered through the cooperative decoding of a group of nodes like fountain codes, if reserved data over these nodes obey the soliton distribution. However, these data on nodes are uncoded (transparent). Thus, the proposed algorithm not only keeps decentralization and security, but also has good scalability in independency and recovery.
\end{abstract}

\begin{IEEEkeywords}
Blockchain, downsampling, transparent coding, erasure coding
\end{IEEEkeywords}

\section{Introduction}
Full nodes, which store the entire blockchain history, can not only serve its own transactions, but also support the recovery of other fail nodes. Therefore, most blockchain systems recommend their users, at least commercial users, to become full nodes. However, the storage cost of becoming a full node is pretty high. For instance, the Bitcoin blockchain size has exceeded $239GB$ until September 2019, and is still growing at a rate of about $50GB$ per year. Even worse, there is no reward or only a few rewards for maintaining a full node. Therefore, maintaining a full node is neither practical nor cost-effective for general users or even some commercial users. Even though there are more than $40$ million Bitcoin wallets around the world, it's not surprising that only about $9000$ of them are full nodes. In other words, only $0.0225\%$ of the nodes are responsible for most of the security guards and services for other nodes. It limits the scalability of the blockchain.

Recently, many researchers have devoted themselves to the storage challenge. In 2008, Nakamoto \cite{Nakamoto2008Bitcoin} proposed Simplified Payment Verification (SPV) and pruned nodes. In 2018, Leung et al. \cite{leung2018vault} proposed Vault to speed up bootstrapping for a new node. Dryja \cite{dryja2019utreexo:} proposed Utreexo to optimize the Bitcoin \emph{unspent transaction outputs} (UTXO) set. These remarkable studies have successfully reduced the storage requirements of nodes. However, they had to sacrifice the part of the blockchain history. The lack of entire blockchain history of many nodes will centralize the entire blockchain history on a small number of nodes, jeopardizing the decentralization of the blockchain. What's more, since historical records play an important role in application scenarios such as supply chain and copyright registration, the lack of blockchain history can also limit the application of blockchain. In addition to these studies, some researchers have tried to deal with this contradiction from the perspective of information theory and coding theory. In 2018, Perard et al. \cite{perard2018erasure} proposed to use erasure coding to create low storage blockchain nodes. In this way, the entire blockchain history can be recovered from a subset of nodes, but the encoded history cannot be used directly unless decoded. Therefore, it is important to find a low-cost blockchain history storage method that can balance decentralization and transparency (stored data can be used directly by the node without decoding) to deal with the contradiction between storage and scalability.

This paper reduces the storage overhead of nodes by downsampling history data. After downsampling, a node only needs to synchronize and maintain a small proportion of all the blocks. We demonstrate that our well-designed \emph{downsampled nodes} (DSNs) can provide good scalability in independency and recovery.

\begin{itemize}
  \item {\bf Independency:} We prove that DSNs are able to independently verify and broadcast transactions. Moreover, DSNs containing blocks with higher entropy have better verification accuracy than DSNs containing blocks with lower entropy. Therefore, we downsample blocks following the distribution of block entropy to achieve better verification accuracy. Furthermore, in the UTXOs model, because of the sequential feature of blockchain, it can be determined whether a transaction output is valid if all blocks after that transaction are known. Therefore, if containing continuous latest blocks, DSNs will not be deceived by a malicious node to believe a transaction that references an invalid input.

     \item {\bf Recovery:} The entire history data can be recovered through the cooperation of a group of DSNs. Usually erasure codes are used for data recovery in distributed storage systems. However, for blockchain, parity-checks of erasure codes can not be verified by the hash value of block headers, and are unavailable before decoding. Instead we propose that uncoded (transparent) transactions are stored on a set of DSNs following the soliton distribution. Only during the recovery, these DSNs encode their transactions to help fail nodes. Suppose that we want to recover a segment with $K$ transactions, and DSNs have $O(\ln (K / \epsilon))$ transactions of this segment on average. Each DSN encodes its related transactions into a codeword by simple bitwise sums. We demonstrate that the recovery probability is $1-\epsilon$ if $K+O\left(\sqrt{K} \ln ^{2}(K / \epsilon)\right)$ codewords of such DSNs are received.

\end{itemize}

This paper is organized as follows. Section II gives the background of the blockchain. Section III presents the downsampling nodes of blockchain. Section IV introduces the information entropy of blockchain to guide the downsampling. Section V achieves the recovery of history data of blockchain with transparent coding. Section VI concludes this paper.

\section{Background}

As a sequential, open and distributed ledger, blockchain cryptographically secures records of transactions \cite{narayanan2016bitcoin}. It is able to tolerate the failures of the Byzantine Generals' Problem~\cite{lamport1982the}. Bitcoin is the first and most typical blockchain system. This section uses it as an example to describe the structure and principle of blockchain.

\begin{figure}[htbp]
\centerline{\includegraphics[width=3.3in]{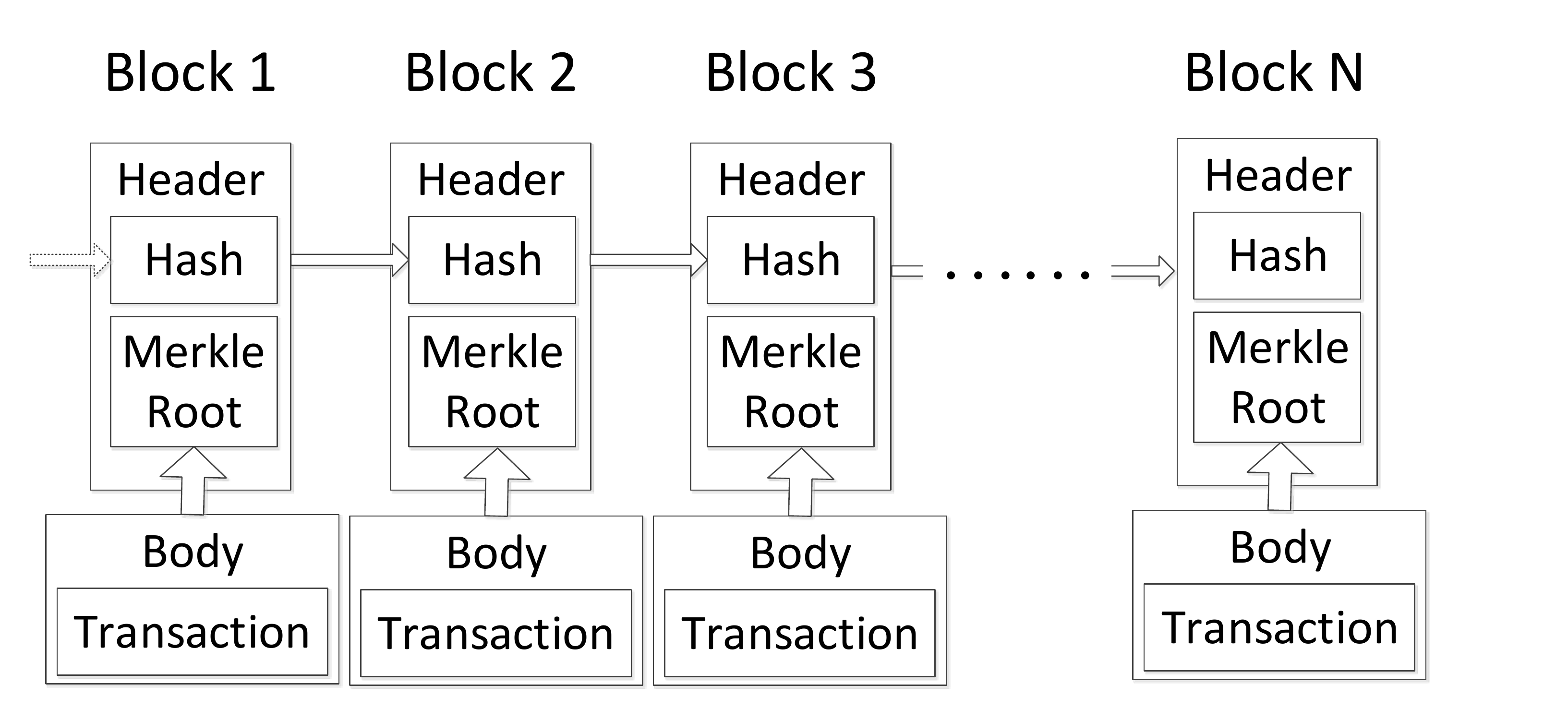}}
\caption{Structure of Bitcoin.}
\label{fg:blockchain_structure}
\end{figure}

The structure of Bitcoin is illustrated in Fig.  \ref{fg:blockchain_structure}. It consists of $N+1$ blocks sequentially stored from block $0$ to block $N$. Every block contains a header and a body. The header has the previous header's hash, current timestamp and Merkle root. With the hash identifier of the previous header, blocks are sequentially chained together by the hash pointer. The body contains transactions and their Merkle tree. This tree is the hash tree of transaction identifiers. The root of this tree is called the Merkle root, which is stored in the header.

A typical transaction has inputs and outputs
\begin{equation}
T \equiv \{T^{in},T^{out}\},
\end{equation}
where $T^{in}$ is the set of all inputs, and $T^{out}$ is the set of all outputs. A valid transaction should use UTXOs as its inputs, except coinbase transaction. When a node processes a new received transaction $T$, it needs to check $T$ with Algorithm \ref{alg:tveri} \cite{antonopoulos2014mastering}, where ${\cal S}_{processed}$ denotes the set of processed transactions, and ${\cal S}_{UTXO}$ denotes the set of UTXOs.

\begin{algorithm}[htb]
\caption{Transaction Verification\cite{antonopoulos2014mastering}}\label{alg:tveri}
\begin{algorithmic}[1]
    \Require
      $T$;
    \Ensure
      true or false;
    \If {$T \in {\cal S}_{processed}$} \Return false
    \EndIf
    \State //Check whether addresses in the $T$ are valid.
    \If {CHECKADDR ($T$) $=$ false} \Return false
    \EndIf
    \State //Check whether the originator of $T$ is the legal owner of the input address.
    \If {CHECKOWNER ($T$) $=$ false} \Return false
    \EndIf
    \If {$T \notin {\cal S}_{UTXO}$} \Return false
    \EndIf
    \If {$\sum T^{in} \ge \sum T^{out}$} \Return true
    \EndIf
\end{algorithmic}
\end{algorithm}

If Algorithm \ref{alg:tveri} returns true, $T$ will be regarded as a valid unconfirmed transaction, then it will be broadcasted. In order to check all new transactions safely, nodes need to download and verify the blocks from the genesis block to the latest block. Then, miners pack several valid transactions into a block. Then following the consensus mechanism \cite{Nakamoto2008Bitcoin} of blockchain, a new transaction will be carried out by all the nodes.

In fact, the blockchain is a state machine whose state transition is based on the transaction\cite{Gavin2019Ethereum}. For a blockchain, starting from a genesis state, the blockchain executes transactions one by one in order, and finally reaches a certain final state. The state of the blockchain could be defined as follows:
\begin{definition}
The state of the blockchain is a set of attributes that reflects the characteristics of the blockchain, which contains the time, consensus results, difficulty, and balances of different owners.
\end{definition}

The state transition caused by the execution of the transaction is
\begin{equation}
\boldsymbol{\sigma}_{t+1} \equiv \Upsilon\left(\boldsymbol{\sigma}_{t}, T_{t}\right),
\end{equation}
where $\Upsilon$ is the state transition function, $\boldsymbol{\sigma}$ is the state of the blockchain system, and $t$ is the time slot.

For a block, since it contains multiple transactions, the state transition of the block can be regarded as a continuous transaction state transfer,
\begin{equation}
\boldsymbol{\sigma}_{t+1} \equiv \Pi\left(\boldsymbol{\sigma}_{t}, B\right),
\end{equation}
\begin{equation}
B \equiv\left(\ldots,\left(T_{0}, T_{1}, \ldots\right), \ldots\right),
\end{equation}
\begin{equation}
\Pi(\boldsymbol{\sigma}, B) \equiv \Omega \left(B, \Upsilon\left( \Upsilon\left(\boldsymbol{\sigma}, T_{0}\right), T_{1}\right) \dots \right),
\end{equation}
where $B$ is the block, which contains a series of transactions and related parameters; $\Pi$ is the block-level state transition function; $\Omega$ is the block finalization function, mainly through the consensus algorithm to determine whether a certain state after the state transition is finalized and could be added to the blockchain.

As saving the entire blockchain, full nodes can check the security of all transactions. However, saving all blocks increases the amount of data the node needs to process. These data need to be downloaded, stored, verified, indexed and updated. This increases the nodes' storage overhead and reduces the efficiency of processing transactions.

%

\section{Downsampling Nodes of Blockchain}
As the volume of transactions rapidly increases, the storage of blockchain history may be unaffordable for most nodes. A node, e.g. SPV node, may discard all block bodies and only save block headers, but it will suffer scalability problems. Since it only stores the block headers of the best chain, it can neither independently verify and broadcast transactions, nor help recover history data of the blockchain.

\begin{algorithm}[htb]
\caption{Transparent Downsampling Blockchain Algorithm}\label{alg:tdba}
\begin{algorithmic}[1]
    \Require The number of reserved blocks $\delta$;
    \State // The DSN $P$ stores the set of entire block headers $\mathcal{H}$.
    \State $P$: STORE($\mathcal{H}$);
    \State //Get the reserved set ${\cal D}$, where ${\cal D}$ is the set of $\delta$ block bodies.
    \State ${\cal D} =$ GETRES($\mathcal{H}$);
    \State $P$: STORE(${\cal D}$);
\end{algorithmic}
\end{algorithm}

In this section, we propose a node to store all the block headers and partial block bodies, named \emph{downsampling node}. Each DSN can be generated by Algorithm \ref{alg:tdba} independently. DSNs are able to  independently verify and broadcast transactions and contribute to the recovery of blockchain history as follows:

\begin{itemize}
\item The DSN uses block bodies of ${\cal D}$ to generate UTXOs pool ${\cal P}_{UTXO}$. When a new $T$ is received, the DSN will check it with Algorithm \ref{alg:dtveri}. If all the checks are passed, $T$ will be regarded as a valid unconfirmed transaction and broadcast.
\begin{algorithm}[htb]
\caption{Transaction Verification of DSNs}\label{alg:dtveri}
\begin{algorithmic}[1]
    \Require
      $T$;
    \Ensure
      true or false;
    \If {$T \in {\cal S}_{processed}$} \Return false
    \EndIf
    \State //Check whether addresses in the $T$ are valid.
    \If {CHECKADDR ($T$) $=$ false} \Return false
    \EndIf
    \State //Check whether the originator of $T$ is the legal owner of the input address.
    \If {CHECKOWNER ($T$) $=$ false} \Return false
    \EndIf
    \If {$T^{in} \in {\cal P}_{UTXO}$} \Return false
    \EndIf
    \If {$\sum T^{in} \ge \sum T^{out}$} \Return true
    \EndIf
\end{algorithmic}
\end{algorithm}

\item The entire history data can be recovered through the cooperation of a group of DSNs. When a node $P'$ needs to recover a block $B'$, it can request the transaction via Algorithm \ref{alg:gb}. In addition, when the node $P'$ needs to recover all blocks, Algorithm \ref{alg:gb} can be used in parallel.

\begin{algorithm}[htb]
  \caption{ Recover a Block }
  \label{alg:gb}
  \begin{algorithmic}[1]
    \Require
      \Statex The set of neighbor nodes of node $P'$, ${\{P^1,P^2,\dots,P^{\cal{M}}\}}$;
      \Statex The block header of $B'$, $h'$;
    \Ensure
      The block $B'$;
    \For{i=1 to $\cal{M}$}
      \State //$P'$ sends $h$ to $P^i$.
      \State $P' \to P^i$: $h$;
      \If {$P^i \to P'$: $B'$} \Return $B'$
      \EndIf
    \EndFor
  \end{algorithmic}
\end{algorithm}

\end{itemize}

It is worth mentioning that DSNs are able to work independently. Moreover, DSNs only make slight changes in choosing the reserved data. Thus, it is possible to keep the original network architecture and consensus algorithms of the blockchain.

In the following two sections, we propose entropy-based downsampling and construct transparent coding to optimize scalability in independency and recovery, respectively.

\section{Entropy-Based Downsampling and Independency of DSNs}
In blockchain, the verification of a transaction strongly relies on the most recent state. It is important for DSNs to select the block bodies to get as many parts of the most recent state as possible. Therefore, we propose to downsample block bodies following the entropy of blockchain.

\subsection{Entropy-based downsampling}
\begin{figure}[htbp]
\centerline{\includegraphics[width=3.3in]{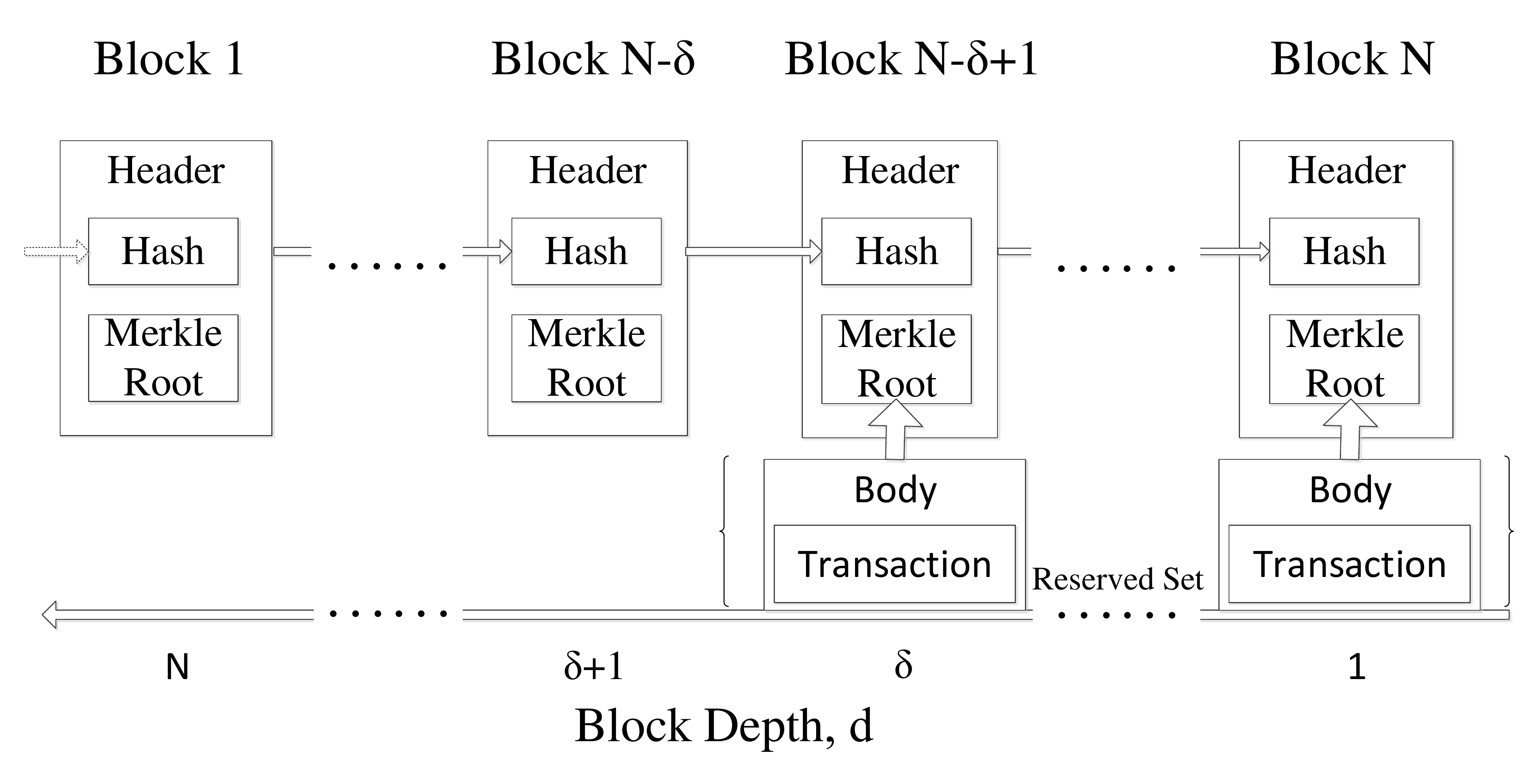}}
\caption{Structure of blockchain on DSNs.}
\label{fg:blockdepth}
\end{figure}
Suppose every block can be viewed independently. The information entropy of various blocks at a time slot is denoted by $H(d)$, where the positive integer $d$ is the block depth as illustrated in Fig. \ref{fg:blockdepth}. The latest block has depth of $1$. Then, the blockchain can be downsampled by the following Algorithm \ref{alg:dba}.

\begin{algorithm}[htb]
\caption{Entropy Based Downsampling Blockchain Algorithm}\label{alg:dba}
\begin{algorithmic}[1]
    \Require
    \Statex The information entropy of blocks, $H(d)$;
    \Statex The number of reserved blocks, $\delta$;
    \State // The DSN $P$ stores the set of entire block headers $\mathcal{H}$.
    \State $P$: STORE($\mathcal{H}$);
    \State //Get the $\delta$ block headers whose block bodies have the largest entropy.
    \State ${\cal H}' =$ MAXN\_H ( $\mathcal{H}, H(d), \delta$ );
    \State //Get the reserved set ${\cal D}$, where ${\cal D}$ is the set of $\delta$ block bodies.
    \State ${\cal D} =$ GETRES($\mathcal{H}'$);
    \State $P$: STORE(${\cal D}$);
\end{algorithmic}
\end{algorithm}
In Algorithm \ref{alg:dba}, the number $\delta$  of reserved block bodies could be set as $\delta = d_{max}/M$, where $M$ is the downsampling factor, or according to the state of the DSN and network.  The DSN generates UTXOs pool ${\cal P}_{UTXO}$ from the reserved block bodies. When it receives a new transaction $T$, it will check $T$ with Algorithm \ref{alg:dtveri}.


%
%
%
%
%
%

\renewcommand\arraystretch{2}
\begin{table}[tbp]
\scriptsize
\centering
\caption{The Probability of Broadcast of DSNs}
\begin{tabular}{p{2.4cm}p{2.4cm}p{2.4cm}}
\hline
Probability &Broadcast &Discard\\ \hline
Valid &${\frac{N_{su}}{N_u}}$&${\frac{N_u-N_{su}}{N_u}}$\\ \hline
Invalid &${\frac{N_{st}-N_{su}}{N_t-N_u}}$&${\frac{N_t-N_u-(N_{st}-N_{su})}{N_t-N_u}}$\\ \hline
\end{tabular}
\end{table}
\renewcommand\arraystretch{1}

Let $N_t$ and $N_u$, respectively, denote the number of all the transaction outputs and the number of all the UTXOs. Let $N_{st}$ and $N_{su}$, respectively, denote the number of  transaction outputs and the number of UTXOs on a DSN. Table I gives the probability of broadcast of DSNs. If the transaction is valid, the DSN will broadcast it; otherwise discard. We define the broadcast accuracy as follows.

\begin{definition}
The broadcast accuracy $\varphi$ is the probability that a node broadcasts valid transactions.
\label{define:broadcast accuracy}
\end{definition}

For a DSN with the reserved set ${\cal D}\subseteq \{d_1,d_2,\ldots,d_\delta\}$, its broadcast accuracy is
\begin{equation}\label{eq_10}
{\varphi_{\cal D}=\frac{N_{su}}{N_u}}.
\end{equation}

We define $u(d)$ as the probability distribution of each block used by a new transaction. Then, the entropy and the broadcast accuracy could be expressed as
\begin{equation}\label{eq_81}
H(d)=E[-\log{u(d)}]=-u(d)\log{u(d)},
\end{equation}
\begin{equation}\label{eq_101}
{\varphi_{\cal D}=\frac{N_{su}}{N_u}}={\int_{\cal D} u(d)\mathrm{d}d}.
\end{equation}
Since $H(d)$ is proportional to $u(d)$, a higher sum of $H(d)$ means a higher sum of $u(d)$. Therefore, with the same number of reserved blocks, the entropy-based downsampling blockchain algorithm could achieve the highest broadcast accuracy among different ways of downsampling.

\subsection{Entropy of Bitcoin}
Here we use Bitcoin as an example to illustrate how to estimate the entropy of the blockchain $H(d)$. Since the new transaction is verified according to the UTXOs pool, the probability distribution of UTXOs could indicate the probability distribution of each block used by a new transaction. In other words, we could estimate $H(d)$ according to the distribution of UTXOs.

From the set theory, we could build the model of the blockchain based on UTXOs. Thus, a valid state can be seen as a set of UTXOs,
\begin{equation}
\boldsymbol{\sigma}_{t} \equiv \{txo^1_{t}, txo^2_{t}, txo^3_{t}, \dots, txo^n_{t}\},
\end{equation}
where $txo^i_{t}$ is the UTXO at the time slot $t$, ${i} \in \mathbb{N}$.

The inputs and outputs of a transaction can also be regarded as a set of transaction outputs in the UTXOs model,
\begin{equation}
T_{t}^{in} \equiv \{txo_{t}^{j_1}, txo_{t}^{j_2}, txo_{t}^{j_3}, \dots, txo_{t}^{j_{in}}\},
\end{equation}
\begin{equation}
T_{t}^{out} \equiv \{txo_{t+1}^{k_1}, txo_{t+1}^{k_2}, txo_{t+1}^{k_3}, \dots, txo_{t+1}^{k_{out}}\}.
\end{equation}
Thus, $T_{t}^{in} \subseteq \boldsymbol{\sigma}_{t}$ and $T_{t}^{out} \subseteq \boldsymbol{\sigma}_{t+1}$.

The state transition between two states can be seen as removing all $txos$ of transaction inputs from the previous state, and adding $txos$ of transaction outputs,
\begin{equation}
\boldsymbol{\sigma}_{t+1} \equiv \boldsymbol{\sigma}_{t} \bigtriangleup T.
\end{equation}

Thus, the most recent state of Bitcoin can be seen as the most recent set of UTXOs. However, the distribution of UTXOs changes as the block height increases, causing difficulty in analysis. Instead, we focus on state duration defined as follows.

\begin{definition}
State duration $x$ is
\begin{equation}\label{eq_surviva}
x \equiv d_{prod}-d_{used},
\end{equation}
where $d_{prod}$ and $d_{used}$ are the depth of the block where a UTXO was produced and used, respectively.
\end{definition}
Let us denote the distribution of $x$ by $N(x)$. For a stable blockchain, the distribution of the state duration is stable. In other words, the state duration is more universal. We can derive the distribution of UTXOs from the survival function of the state duration.

If every UTXO is random, independent and equally possible to use, there will be more UTXOs with shorter state duration than with longer state duration. In this situation, the state duration should conform to the exponential distribution.

The probability density function of the state duration is
\begin{equation}\label{eq_1}
f(x)=\frac{N(x)}{\int_0^{+\infty} N(x)\mathrm{d}x}.
\end{equation}
Its cumulative function is
\begin{equation}\label{eq_4}
C(d)=\int_0^{d-1} f(x)\mathrm{d}x.
\end{equation}
 For a block with depth $d$, each output is used with probability $C(d)$.

The probability of UTXOs in each block is
\begin{equation}\label{eq_5}
U(d)=1-C(d),
\end{equation}
which is the survival function of the state duration.

If the number of transaction outputs is similar, the probability density function $u(d)$ of UTXOs can be derived from $U(d)$,
\begin{equation}\label{eq_6}
u(d)=\frac{U(d)}{\int_0^{+\infty} U(d)\mathrm{d}d}.
\end{equation}

Based on the $224197$ blocks and $847656$ UTXOs of Bitcoin blockchain on April 21, 2018, we fit the distribution of the state duration to the function
\begin{equation}\label{eq_2}
N(x)=115000e^{-2.005x}+38850e^{-0.1302x},
\end{equation}
with $R-square=0.99$. We include $N(x)$ and the actual distribution of the state duration in Fig. \ref{fg:survivaltime}. It shows that $N(x)$ is close to the actual distribution. The state duration of Bitcoin is mainly distributed in smaller areas. Therefore, newer transaction outputs are more likely to be unspent.

\begin{figure}[htbp]
\centerline{\includegraphics[width=3.1in]{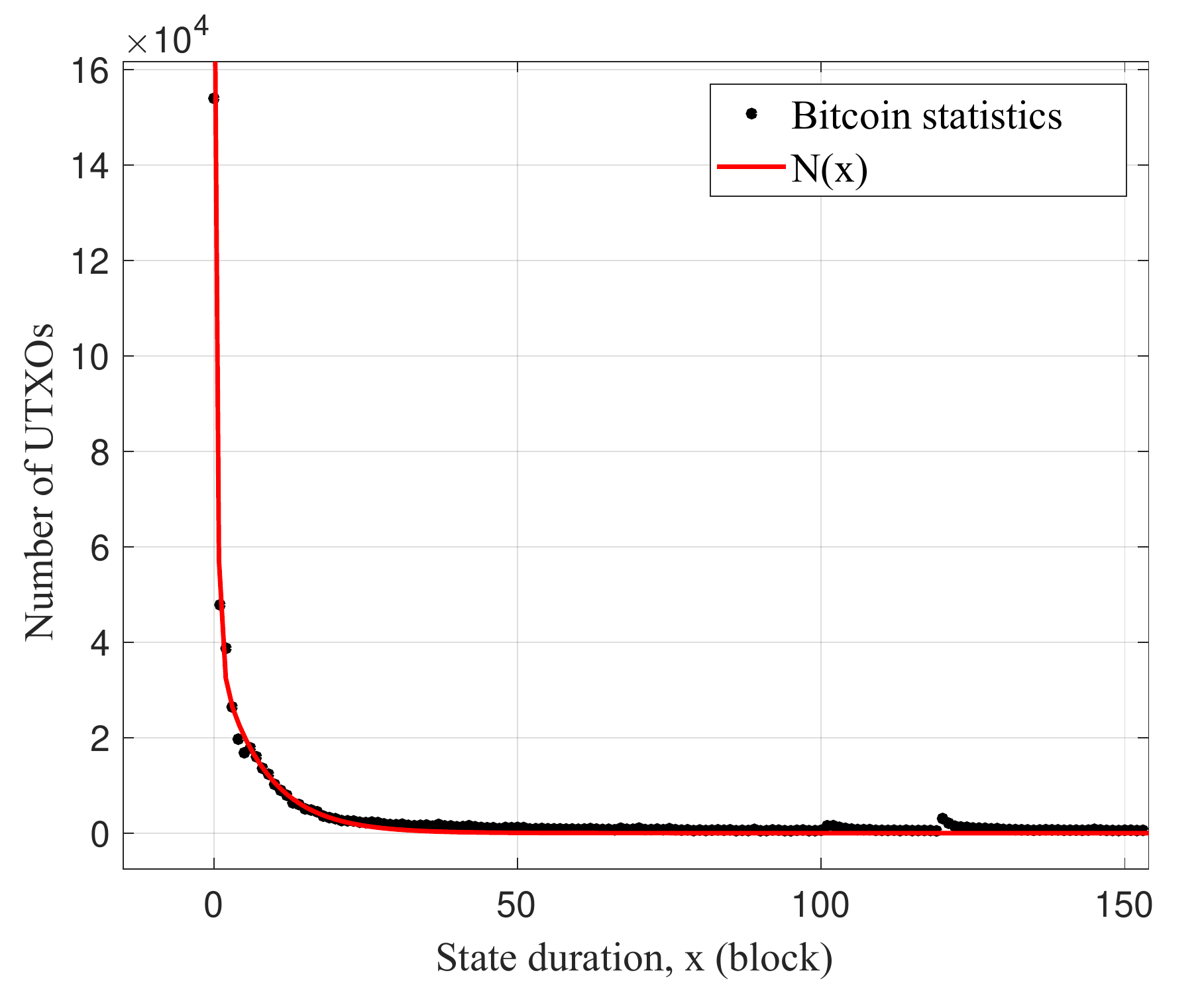}}
\caption{State duration of Bitcoin.}
\label{fg:survivaltime}
\end{figure}

According to the above $N(x)$, we have
\begin{equation}\label{eq_3}
f(x)=0.3233e^{-2.005x}+0.1092e^{-0.1302x},
\end{equation}
and
\begin{equation}\label{eq_7}
u(d)=0.0247e^{-2.005d}+0.1286e^{-0.1302d}.
\end{equation}

Then, we can give the entropy of Bitcoin \begin{equation}\label{eq_71}
\begin{array}{cc}
H(d)=&(0.0247e^{-2.005d}+0.1286e^{-0.1302d})\\
&\times log(0.0247e^{-2.005d}+0.1286e^{-0.1302d}).
\end{array}
\end{equation}

\subsection{Simulation results}
On the basis of theoretical analysis, we test the performance of entropy-based DSNs under various downsampling factors $M$.

\begin{figure}[htbp]
\centering
\includegraphics[width=3.1in]{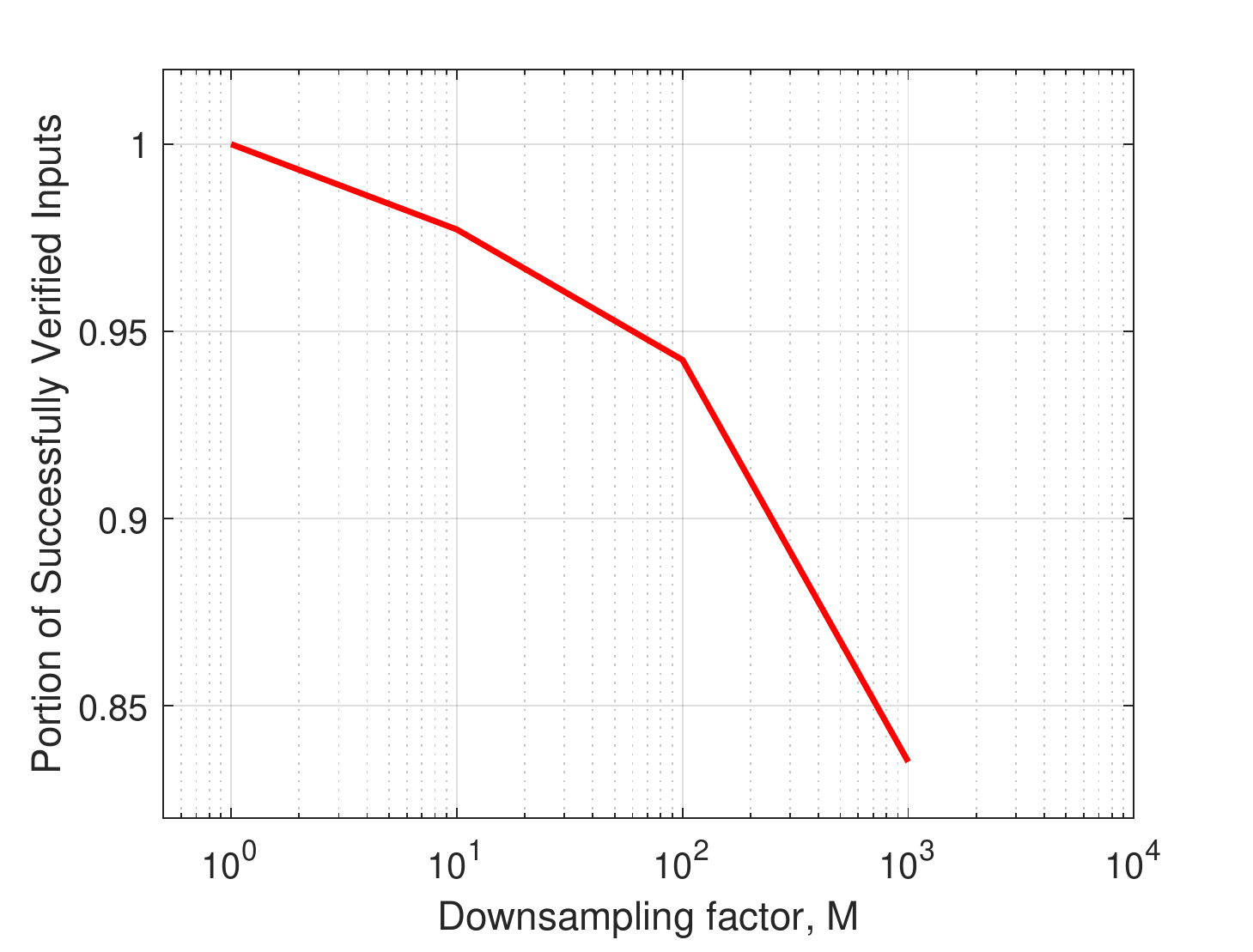}
\caption{Average broadcast accuracy of entropy-based DSNs under various downsampling factors.}
\label{fg:portionInputsAve}
\end{figure}

On June 17, 2019, we simulated entropy-based DSNs with the $6057$ transaction inputs of the Bitcoin. It can be seen from Fig.~\ref{fg:portionInputsAve} that even if the downsampling factor reaches $1000$, the DSN can obtain an average broadcast accuracy of $80\%$ or more. When the downsampling factor is $100$, the average broadcast accuracy of DSNs is over $90\%$. If the downsampling factor is $10$, the loss of average broadcast accuracy is only about $3\%$.

\begin{figure}[htbp]
\centering
\includegraphics[width=3.1in]{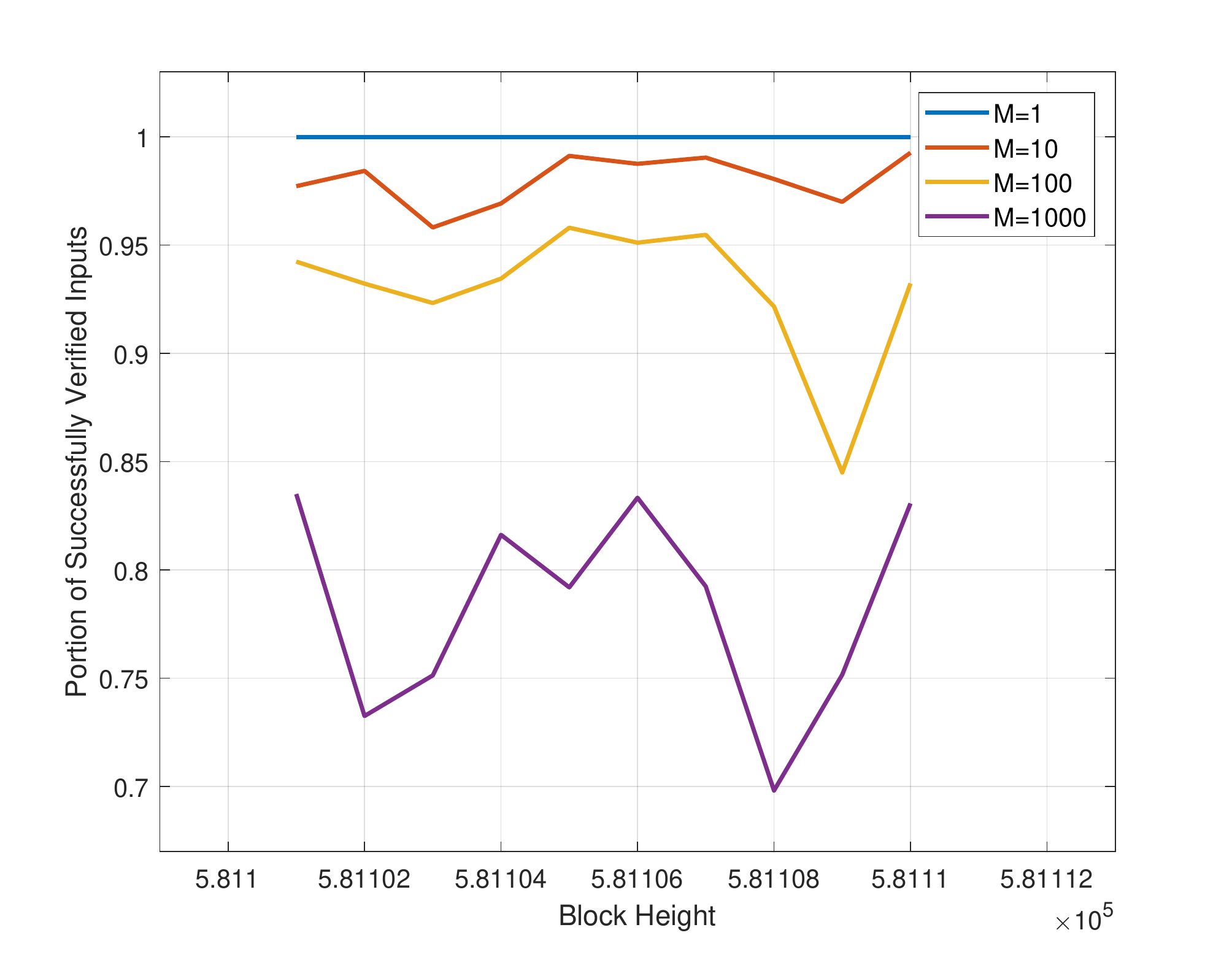}
\caption{Real-time broadcast accuracy of entropy-based DSNs under different downsampling factors.}
\label{fg:portionInputsNow}
\end{figure}

Then, we investigate the real-time broadcast accuracy for each block. We test $78$ blocks from block heights $581101$ to $581178$ of the Bitcoin. As show in Fig.~\ref{fg:portionInputsNow}, when processing the transactions of each block in real-time, even if the downsampling factor reaches $1000$, DSNs can obtain the real-time broadcast accuracy more than $70\%$. When the downsampling factor is $100$, the real-time broadcast accuracy of DSNs is over $80\%$. If the downsampling factor is $10$, the real-time broadcast accuracy of DSNs is over $95\%$.

In short, DSNs can significantly reduce the storage overhead, while keeping high broadcast accuracy both on average and in real-time.

\subsection{Security analysis}

Since a DSN stores all block headers and part of bodies, its security is always better than SPV nodes. Moreover, because of the sequential feature of blockchain, it can be determined whether a transaction output is UTXO if all blocks after that transaction output are known. An invalid transaction could not pass the verification process of DSNs, proved by the following theorems and lemma.

\begin{theorem}
If $T_{k+1}, T_{k+2}, T_{k+3}, \dots, T_{t}$ do not use a output of $T_{k}$ as the transaction input, this output is UTXO.
\end{theorem}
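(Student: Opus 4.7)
The plan is to argue directly from the set-theoretic UTXO model developed in Section IV-B. Write $o$ for the particular output of $T_k$ in question. Since $\boldsymbol{\sigma}_t$ is by definition the set of UTXOs at time $t$, the statement ``$o$ is UTXO'' is equivalent to $o \in \boldsymbol{\sigma}_t$, and I will prove this membership by induction on the time index starting from $k$.

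For the base case, the state transition $\boldsymbol{\sigma}_{k+1} \equiv \boldsymbol{\sigma}_{k} \bigtriangleup T_k$ together with $T_k^{out} \subseteq \boldsymbol{\sigma}_{k+1}$ gives $o \in \boldsymbol{\sigma}_{k+1}$ immediately, because $o \in T_k^{out}$. For the inductive step, suppose $o \in \boldsymbol{\sigma}_j$ for some $k+1 \le j \le t-1$. Applying the transition $\boldsymbol{\sigma}_{j+1} \equiv \boldsymbol{\sigma}_j \bigtriangleup T_j$, which removes the elements of $T_j^{in}$ from $\boldsymbol{\sigma}_j$ and adds those of $T_j^{out}$, the only way $o$ could leave the state is if $o \in T_j^{in}$; but by the hypothesis of the theorem this is ruled out for every $j \in \{k+1,\ldots,t\}$. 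Hence $o \in \boldsymbol{\sigma}_{j+1}$, and iterating up to $j=t-1$ yields $o \in \boldsymbol{\sigma}_t$, which is the claim.

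The main obstacle is not a computation but a semantic one: the symbol $\bigtriangleup$ used in the excerpt must be interpreted as the input-remove/output-add update rather than ordinary set-theoretic symmetric difference, and one must justify that $o$ is not also newly introduced (and hence not ``re-added'' ambiguously) by any $T_j$ with $j>k$. This follows because a transaction output $o$ produced by $T_k$ is uniquely identified by the pair $(T_k,\text{index})$, so it cannot reappear as an element of $T_j^{out}$ for $j \neq k$; the induction therefore tracks a single, unambiguous set element across the chain of state transitions.
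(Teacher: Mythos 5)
Your proof is correct, but it takes a genuinely different route from the paper. The paper argues by contradiction at the level of individual transactions: assuming the output of $T_k$ has been spent, it notes that the spender cannot be any of $T_{k+1},\dots,T_t$ (by hypothesis), cannot be $T_k$ itself (a transaction cannot consume its own output), and therefore would have to be some $T_i$ with $i<k$, which contradicts the temporal ordering of the chain --- an output cannot be spent before it exists. You instead run a forward induction on the state sets $\boldsymbol{\sigma}_j$: the output enters $\boldsymbol{\sigma}_{k+1}$ via $T_k^{out}\subseteq\boldsymbol{\sigma}_{k+1}$, and each transition $\boldsymbol{\sigma}_{j+1}\equiv\boldsymbol{\sigma}_j\bigtriangleup T_j$ can only evict it if $o\in T_j^{in}$, which the hypothesis forbids. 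The two arguments are logically equivalent (your induction is essentially the contrapositive enumeration of all possible spenders, organized chronologically), but yours buys something concrete: it works directly inside the set-theoretic state model of Section IV-B and produces the stronger, explicit conclusion $o\in\boldsymbol{\sigma}_t$, whereas the paper's version leans on the informal notion of ``already been used'' and the unstated axiom that every spend is attributable to some $T_i$. Your closing remarks on the intended semantics of $\bigtriangleup$ and on the uniqueness of the output identifier $(T_k,\text{index})$ address real gaps in the paper's formalism that the published proof silently skips over; the only hypothesis you use that the paper makes explicit and you leave implicit is that $T_k$ cannot spend its own output, which your base case absorbs harmlessly since the transition adds $T_k^{out}$ after removing $T_k^{in}$.
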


\begin{proof}
Assume that an output of $T_{k}$ has already been used.
\begin{equation}\label{proof_1}
  \because T_{i}^{in} \not= T_{k}^{out}, \forall i \in \{k+1, k+2, k+3, \dots, t\},
\end{equation}
as the transaction cannot use its own output as input,
\begin{equation}\label{proof_2}
  \therefore T_{i}^{in} = T_{k}^{out}, \exists i < k,
\end{equation}
which contradicts the time sequence of the blockchain system. Thus, the assumption is not true, and this output is UTXO.
\end{proof}

Similarly, it is easy to prove the following theorem.
\begin{theorem}
If $T_{k+1}, T_{k+2}, T_{k+3}, \dots, T_{t}$ use a output of $T_{k}$ as the transaction input, this output is not UTXO.
\end{theorem}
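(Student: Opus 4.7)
The plan is to argue directly from the definition of a UTXO rather than mimicking the contradiction template of Theorem 1, since here the hypothesis already exhibits a transaction that consumes the output. By definition, a transaction output is a UTXO if and only if it has not been referenced as an input of any confirmed transaction anywhere in the ledger. Thus the argument reduces to exhibiting a single witness of consumption among $T_{k+1}, T_{k+2}, \dots, T_{t}$, which the hypothesis hands us for free.

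Concretely, I would first fix the output of $T_{k}$ in question, call it $txo$, and unpack the hypothesis: there exists some index $i \in \{k+1, k+2, \dots, t\}$ with $txo \in T_{i}^{in}$. Then I would invoke the UTXOs-model state transition $\boldsymbol{\sigma}_{i+1} \equiv \boldsymbol{\sigma}_{i} \bigtriangleup T_{i}$ from Section IV-B, which removes every element of $T_{i}^{in}$ from the running state. Hence $txo \notin \boldsymbol{\sigma}_{j}$ for all $j \ge i+1$, and in particular $txo \notin \boldsymbol{\sigma}_{t+1}$. Since the set of UTXOs at time $t+1$ is exactly $\boldsymbol{\sigma}_{t+1}$, the output $txo$ is not a UTXO, which is the conclusion.

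If one prefers the contradiction style that mirrors Theorem 1, the same reasoning can be packaged as: assume $txo$ is a UTXO; then by definition no prior or subsequent transaction has spent it, so $T_{i}^{in} \neq T_{k}^{out}$ for all $i$, contradicting the hypothesis. Either way, there is essentially no obstacle; the only subtlety worth flagging is that the hypothesis must be read at the granularity of the specific output rather than the set $T_{k}^{out}$ as a whole, because a single transaction can have multiple outputs and Theorem 2 asserts non-UTXO status only for the particular output that was consumed. I would make this granularity explicit in the proof so that the statement is unambiguous in the multi-output setting typical of Bitcoin.
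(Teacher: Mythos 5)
Your proposal is correct. Note that the paper gives no explicit proof of this theorem at all --- it only remarks ``Similarly, it is easy to prove the following theorem,'' deferring to the contradiction template of Theorem~1. Your main argument is a direct one: extract the witness index $i$ from the hypothesis, apply the state transition $\boldsymbol{\sigma}_{i+1} \equiv \boldsymbol{\sigma}_{i} \bigtriangleup T_{i}$ to remove $txo$ from the running UTXO set, and conclude $txo \notin \boldsymbol{\sigma}_{t+1}$. This is cleaner and more formal than the paper's implied route, since it grounds ``not UTXO'' in the set-theoretic state model of Section~IV rather than in an informal notion of ``already used''; the one step you lean on implicitly is that a removed output is never re-inserted by a later transition, which holds because each transaction's outputs are fresh $txo$'s, and would be worth one sentence. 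Your closing contradiction-style paragraph is essentially the proof the paper intends (and it is nearly tautological, which is presumably why the authors omitted it). Your remark about reading the hypothesis at the granularity of a single output rather than the set $T_{k}^{out}$ is a genuine improvement: the paper's own notation $T_{i}^{in} = T_{k}^{out}$ in the proof of Theorem~1 conflates set membership with set equality, and your version repairs that ambiguity.
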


In addition, because transactions in blocks are continuous, it comes to the following lemma.
\begin{lemma}
If there are $B_{k+1}, B_{k+2}, B_{k+3}, \dots, B_{t}$, the generated set of UTXOs $\boldsymbol{\sigma} _{t}^{k+1} \subseteq \boldsymbol{\sigma} _{t}$.
\end{lemma}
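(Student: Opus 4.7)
The plan is to prove set inclusion in the standard way: take an arbitrary element $txo \in \boldsymbol{\sigma}_t^{k+1}$ and show that $txo \in \boldsymbol{\sigma}_t$. The key observation is that $\boldsymbol{\sigma}_t^{k+1}$ is built from only the partial history $B_{k+1},B_{k+2},\ldots,B_t$, whereas $\boldsymbol{\sigma}_t$ is built from the full chain $B_0,B_1,\ldots,B_t$; Theorem 1 is tailor-made to bridge the two.

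Concretely, I would proceed as follows. First, characterise the elements of $\boldsymbol{\sigma}_t^{k+1}$: any $txo \in \boldsymbol{\sigma}_t^{k+1}$ must be produced as an output of some transaction $T_j$ lying inside one of the blocks $B_{k+1},\ldots,B_t$ (say $j \in \{k+1,\ldots,t\}$), and must not appear as an input of any transaction $T_{j+1},\ldots,T_t$ in those same blocks (otherwise it would have been removed by the symmetric-difference update $\boldsymbol{\sigma}_{t+1} \equiv \boldsymbol{\sigma}_t \bigtriangleup T$). Second, invoke the temporal ordering of the blockchain: no transaction $T_i$ with $i \le j$ can use $txo$ as an input, because $txo$ did not yet exist when $T_i$ was processed. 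Combining these two observations, no transaction in the entire ordered history up to time $t$ has consumed $txo$.

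Third, apply Theorem 1 with the creating transaction $T_j$ playing the role of $T_k$ in the theorem statement: the hypothesis ``$T_{j+1},\ldots,T_t$ do not use $T_j^{out}$ as input'' is exactly what we established, so Theorem 1 concludes that $txo$ is a UTXO at time $t$, i.e. $txo \in \boldsymbol{\sigma}_t$. Since $txo$ was arbitrary, $\boldsymbol{\sigma}_t^{k+1} \subseteq \boldsymbol{\sigma}_t$.

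I do not expect a serious obstacle here; the lemma is essentially a bookkeeping consequence of Theorem 1 plus the sequential (append-only) nature of the chain. The only mildly delicate point is being careful about what ``generated set of UTXOs $\boldsymbol{\sigma}_t^{k+1}$'' means operationally — one has to justify that running the state-transition $\Pi$ forward from an empty (or any) starting set, using only the blocks $B_{k+1},\ldots,B_t$, cannot fabricate an output that is absent from the true $\boldsymbol{\sigma}_t$. This reduces to the two points above (outputs must be created inside the window, and once created they can only be removed by \emph{later} in-window spends), so the argument stays short.
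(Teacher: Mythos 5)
Your proof is correct and is essentially the argument the paper intends: the paper in fact gives no explicit proof of this lemma, only the remark that it ``comes'' from the continuity of transactions in blocks, and your formalization --- characterize $\boldsymbol{\sigma}_t^{k+1}$ as outputs created in $B_{k+1},\dots,B_t$ and unspent within that window, note that no pre-window transaction can spend a not-yet-existing output, then apply Theorem~1 --- is exactly the missing bookkeeping. The delicate point you flag yourself is genuine: under a literal reading of the update $\boldsymbol{\sigma}_{t+1} \equiv \boldsymbol{\sigma}_t \bigtriangleup T$, an in-window transaction spending a pre-window output would spuriously \emph{add} that already-spent output to $\boldsymbol{\sigma}_t^{k+1}$ via the symmetric difference, so restricting $\boldsymbol{\sigma}_t^{k+1}$ to outputs produced inside the window (i.e.\ reading the transition as ``remove present inputs, add outputs'') is a necessary hypothesis of the argument rather than a mere clarification.
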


Therefore, an invalid transaction can neither jeopardize the security of payment nor the security of broadcasting.

\section{Blockchain Recovery with DSNs}
In this section, we investigate the recovery of blockchain with DSNs. Besides downsampling in terms of blocks, we can downsample in terms of transactions. Since transactions are the smallest units of blockchain state transition, recovery in terms of transactions is flexible and efficient. Thus, here we further investigate recovery of blockchain in terms of transactions.

\subsection{DSNs with random transparent coding}
From the probability analysis \cite{1561992}, after $N$ balls are independently and randomly thrown into $K$ bins, and $K$ is large enough, the probability that all bins have a ball is $1-\epsilon$ when
\begin{equation}\label{eq_3}
N >K \log _{e} \frac{K}{\epsilon}.
\end{equation}

In other words, $K$ transactions could be recovered with probability $1-\epsilon$, if $K \log _{e} \frac{K}{\epsilon}$ transactions are randomly stored. Usually, we use erasure coding to ensure data recovery. However, parity symbols of erasure coding can not be verified by hash in the blockchain. It brings serious issues in security and availability.

In order to make data on each node directly be verified and used, we propose to transparently store data, and only encode during data recovery as follows.

The storage algorithm of a node $P$, which is the downsampling algorithm with transparent coding, can be described as Algorithm \ref{alg:1}. This algorithm constructs a spatial distribution of transactions on the blockchain network.
\begin{algorithm}[htb]
  \caption{ Downsampling with transparent coding}
  \label{alg:1}
  \begin{algorithmic}[1]
    \State // $P$ stores the set of entire block headers $\mathcal{H}$ and the set of the identifier (hash value) $\mathcal{I}$ of every transaction.
    \State $P$: STORE($\mathcal{H}, \mathcal{I}$);
    \State // P generates a positive integer random variable $\gamma$ with a mean of $\log _{e} \frac{K}{\epsilon}$.
    \State $\gamma =$ RANDOMI($1,K, \log _{e} \frac{K}{\epsilon}$);
    \State $P$: uniformly and randomly select $\gamma$ distinct identifiers $I_1, I_2,...,I_{\gamma} \in \mathcal{I}$;
    \For{j=1 to $\gamma$}
      \State $P$: get transaction $T_j$, where the identifier of $T_j$ is $I_j$;
      \State $P$: STORE($T_j$);
    \EndFor
  \end{algorithmic}
\end{algorithm}

When a node $P'$ needs a transaction $T'$, it can request the transaction via Algorithm \ref{alg:2}. The probability that the transaction $T'$ cannot be found is $(1-\log _{e} \frac{K}{\epsilon})^{\cal{M}}$, if all neighbor nodes are encoded with $\bar{\gamma} = \log _{e} \frac{K}{\epsilon}$, where $\cal{M}$ is the number of neighbor nodes and $\bar{\gamma}$ is the average of $\gamma$.

\begin{algorithm}[htb]
  \caption{ Recovery of a transaction}
  \label{alg:2}
  \begin{algorithmic}[1]
    \Require
      \Statex The set of neighbor nodes of node $P'$, ${\{P^1,P^2,\dots,P^{\cal{M}}\}}$;
      \Statex The identifier of the transaction $T'$, ${I}_{T'}$;
    \Ensure
      The transaction $T'$;
    \For{i=1 to $\cal{M}$}
      \State //$P'$ sends ${I}_{T'}$ to $P^i$.
      \State $P' \to P^i$: ${I}_{T'}$;
      \If {$P^i \to P'$: $T'$} \Return $T'$
      \EndIf
    \EndFor
  \end{algorithmic}
\end{algorithm}

In addition, when a node $P'$ needs to recover all transactions of a block, although Algorithm \ref{alg:2} can be used in parallel, it will waste bandwidth, meaning that $O(K \ln (k / \epsilon))$ transactions need to be propagated. This problem can be solved by transmitting the bitwise sum, modulo $2$, of ${\gamma}^i$ transactions of each node. The codeword sent by node $P^i$ is
\begin{equation}\label{eq_4}
  c^i = T_1 \oplus T_2 \oplus T_3 \oplus \dots \oplus T_{{\gamma}^i},
\end{equation}
where ${\gamma}^i$ is the degree of the codeword. Since the average degree $\bar{\gamma}$ is significantly less than $K$, we can decode these codewords like decoding a sparse-graph code over an erasure channel. Here we use the message passing algorithm, which is described in detail in Algorithm \ref{alg:3}.

\begin{algorithm}[htb]
  \caption{ Decoding of transparent coding}
  \label{alg:3}
  \begin{algorithmic}[1]
    \State Get G; // Get the coding matrix.
    \Repeat
      \State find a codeword $c^i$ that is connected to only one transaction $T_j$;
      \If {$c^i == null$}
        \Return false;
      \EndIf
      \State $T_j = c^i$;
      \ForAll {$i'$ such that $G_{i'j} == 1$}
      \State $c^{i'} = c^{i'} \oplus T_j$;
      \State $G_{i'j} = 0$
    \EndFor
    \Until{all $T_j$ are determined}
  \end{algorithmic}
\end{algorithm}

\subsection{DSNs with robust soliton distribution transparent coding}
Although coding can reduce bandwidth consumption, in practice the random selection of degrees performs poorly because it is very likely that there is no codeword with the degree-one at some point in the decoding. So the choice of the degree distribution is very important. Therefore, inspired by fountain codes \cite{1181950}, we find that the robust soliton distribution can be used as the probability distribution of the random variable in Algorithm \ref{alg:1}, which ensures that the expected number of codewords with the degree-one is around
\begin{equation}
S \equiv c \log _{e}(K / \epsilon) \sqrt{K},
\end{equation}
where $\epsilon$ is the probability that the node $P'$ could not decode all transactions after $K'=KZ$ codewords have been received, and $c$ is a constant of order $1$. The robust soliton distribution is
\begin{equation}
\mu(\gamma)=\frac{\rho(\gamma)+\tau(\gamma)}{Z},
\end{equation}
where
\begin{equation}
\begin{array}{l}{\rho(1)=1 / K} \\ {\rho(\gamma)=\frac{1}{\gamma(\gamma-1)} \quad \text { for } \quad \gamma=2,3, \ldots, K},\end{array}
\end{equation}

\begin{equation}
\tau(\gamma)=\left\{\begin{array}{ll}{\frac{S}{K} \frac{1}{\gamma}} & {\text { for } \gamma=1,2, \ldots,(K / S)-1} \\ {\frac{S}{K} \log (S / \epsilon)} & {\text { for } \gamma=K / S} \\ {0} & {\text { for } \gamma>K / S}\end{array}\right.,
\end{equation}
and
\begin{equation}
Z=\Sigma_{\gamma} (\rho(\gamma)+\tau(\gamma)).
\end{equation}

\begin{figure}[tp]
\centering
\includegraphics[width=3.3in]{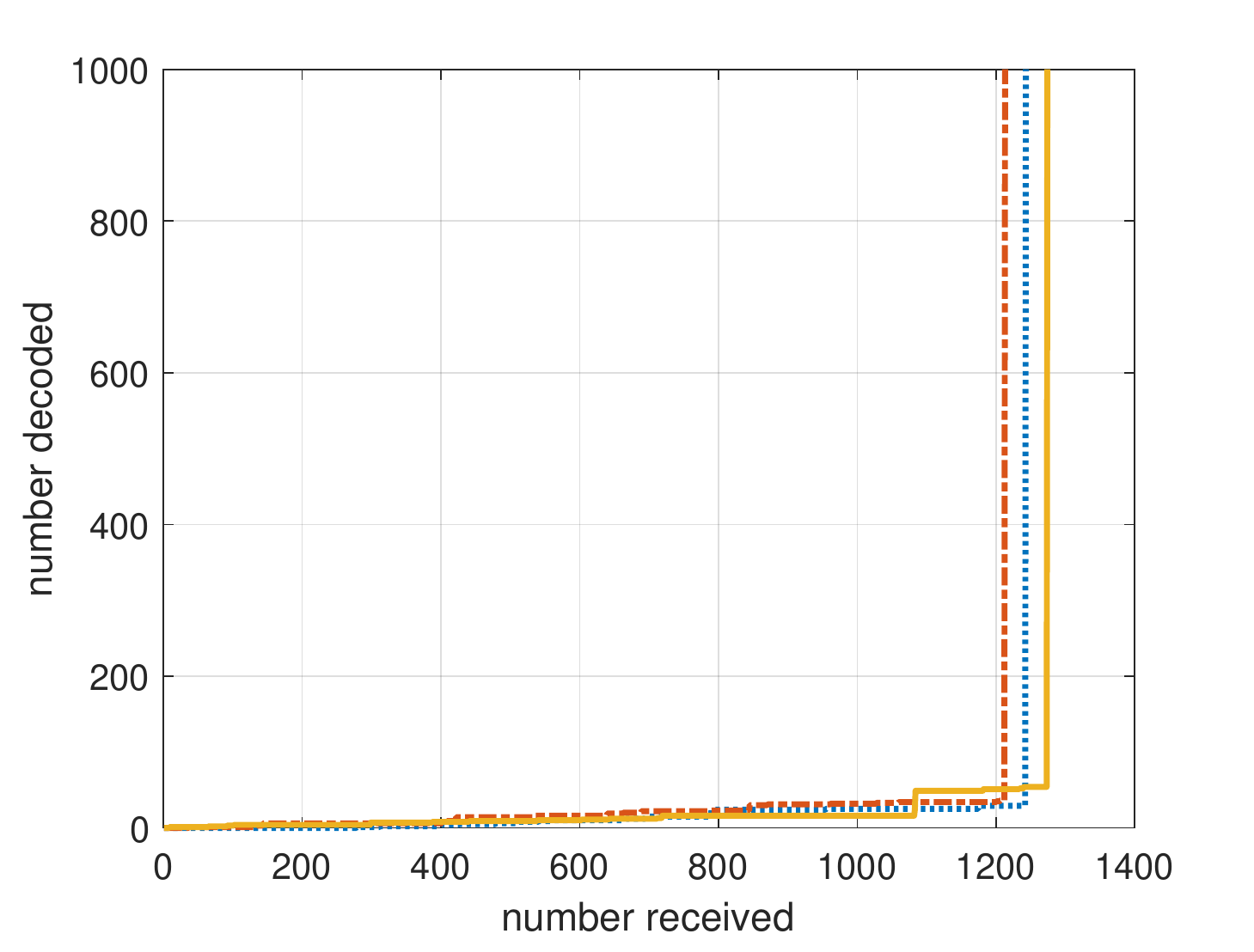}
\caption{Performance of robust soliton distribution transparent blockchain codes.}
\label{fig:1}
\end{figure}
In this case, the decoding algorithm of transparent blockchain codes is the same as the decoding algorithm of fountain codes. Therefore, any $K+O\left(\sqrt{K} \ln ^{2}(K / \epsilon)\right)$ codewords from different nodes can recover the $K$ original transactions with probability $1-\epsilon$, and each node only needs to store $O(\ln (K / \epsilon))$ transactions on average. Figure \ref{fig:1} shows three different decoding runs of transparent blockchain codes, where $c=0.05, \epsilon = 0.05, K=1000, \bar{\gamma} = 10.40$. The original block could be recovered when the number of received codewords is around 1300.

\subsection{Security analysis}
DSNs with transparent coding have two following properties to meet the requirement of safety:
\begin{itemize}
  \item {\bf Anti-fraud:} it is computationally infeasible to defraud a DSN to trust a forged transaction.
  \item {\bf Anti-obstruction:} it is computationally infeasible to have a DSN spend much more than normal to verify a transaction.
\end{itemize}
These two properties could be formalized in terms of two games that we play with two adversaries, respectively.

In the anti-fraud game, there is an adversary who claims that he can defraud a DSN to trust a forged transaction and a challenger that will test this claim. Here the DSN is the challenger. We are going to allow the adversary to run the hash function of his choice inputs, for as long as he wants, as long as the number of guesses is plausible. Once the adversary is satisfied that he has tried enough inputs, then the adversary picks a transaction and attempts to forge a transaction. However, since the hash function is collision resistant, the challenger will detect the inconsistency between the forged transaction and the hash identifier. Thus, the challenger will win the game.

In addition, in the anti-congestion game, there is an adversary who claims that he can have a DSN spend much more than normal to verify a transaction and a challenger that will test this claim. Here the DSN is the challenger. We are going to allow the adversary to send any codeword to the challenger, for as long as the number of codewords is plausible. The adversary attempts to obstruct the challenger to verify transactions. However, since the maximum degree of the codeword is $K$, the challenger will be able to decode any codeword within $K-1$ bitwise sums, modulo $2$. Therefore, the challenger will win the game.
\section{Conclusion}
In this paper, we proposed to downsample history data to reduce the storage overhead of blockchain. It demonstrates that entropy based downsampling can provide high verification and broadcast accuracy. Moreover, in order to recover data in a practical way, we proposed to transparently store data following robust soliton distribution, and only encode during data recovery. It demonstrates that our proposed transparent coding provides reliability with low bandwidth consumption. In short, our transparent downsampling blockchain algorithm has good scalability in independency and recovery.

\bibliographystyle{IEEEtranTCOM}
\bibliography{IEEEabrv,reference}
\end{document}